%% file: paper.tex
\documentclass[letterpaper, 10 pt, conference]{ieeeconf}  

\IEEEoverridecommandlockouts                              
\input{Usepackages.tex}

\usepackage{amsthm}

\tikzset{>={Stealth[]}} 

\input{mycommands.tex}

\newcommand{\zbar}{\bar z}
\makeatletter
\newcommand{\changes}[1]{%
  \ifmmode
    \textcolor{black}{#1}
  \else
    \textcolor{black}{\textnormal{#1}}
  \fi
}
\makeatother

\theoremstyle{plain}
\newtheorem{theorem}{Theorem}

\theoremstyle{remark}
\newtheorem{remark}{Remark}

\RequirePackage[
        backend=biber,
        style=ieee,
        firstinits=true,
        uniquename=init,
        maxbibnames=99,
        date=year,
        url=false,
        isbn=false,
        doi=false
        ]{biblatex}
\AtBeginBibliography{\footnotesize}

\title{\LARGE \bf
Backstepping Design of Dynamic State Feedback Controllers for Parabolic Systems
}

\graphicspath{{../Figures/}}

\author{Nicole Gehring \and Benedikt Schwämmle* \and Abdurrahman Irscheid \and Joachim Deutscher
\thanks{* corresponding author}
\thanks{N. Gehring and B. Schwämmle are with the Chair of Systems Theory and Control Engineering,
        Otto von Guericke University Magdeburg, Germany
        {\tt\small \{nicole.gehring, benedikt.schwaemmle\}@ovgu.de}}%
\thanks{A. Irscheid is with the Department of Mechanical Engineering,
        American University of Sharjah, United Arab Emirates
        {\tt\small airscheid@aus.edu}}%
\thanks{J. Deutscher is with the Institute of Measurement, Control and Microtechnology,
        Ulm University, Germany
        {\tt\small joachim.deutscher@uni-ulm.de}}%
}

\begin{document}

\maketitle
\thispagestyle{empty}
\pagestyle{empty}

\begin{abstract}
Recently, dynamic state feedback controllers that are based on dynamic extensions have been presented for heterodirectional hyperbolic systems.
In this paper, a similar concept for the control of coupled diffusion-reaction systems is suggested.
The introduction of a specific controller dynamics leads to homogenized diffusion coefficients for the extended system.
Then, a backstepping-based static state feedback for the dynamically extended system is designed, which, overall, results in a dynamic state feedback.
Such a design allows stabilizing a more general class of parabolic systems as well as assigning arbitrary closed-loop dynamics.
This can be used, e.g., to achieve a decoupled input-output behavior, which is, in general, not possible with a static state feedback.
A simulation example illustrates the results.

\end{abstract}

\begin{keywords}

   Backstepping, parabolic systems, dynamic controllers, distributed-parameter systems

\end{keywords}

\section{INTRODUCTION}
\label{sec:intro}
Since its introduction, backstepping has become a powerful tool for controlling both hyperbolic and parabolic infinite-dimensional systems, see \cite{krsticBoundaryControlPDEs2008} as well as \cite{vazquezBacksteppingPartialDifferential2026} for a general overview.
While the concept was initially only presented for a scalar parabolic system, \cite{baccoliBoundaryControlCoupled2014} and \cite{baccoliBoundaryControlCoupled2015} extended the results to systems consisting of $n$ coupled diffusion-reaction equations with identical and mutually different diffusion coefficients, respectively.
The case of multivariable parabolic systems with mutually different \emph{spatially varying} diffusion coefficients was presented in \cite{vazquezBoundaryControlCoupled2017} for Dirichlet boundary conditions and in \cite{deutscherBacksteppingControlCoupled2018} for Robin boundary conditions.

In all the aforementioned cases, static state feedback was used as the stabilizing control law, which means that the controller only makes use of the state at the current time.
The main limitation of backstepping-based static state feedback lies in the fact that systems with mutually different diffusion coefficients require specific terms in the target system for the well-posedness of the kernel equations.
A more general concept offering additional design opportunities is dynamic state feedback, where the controller contains its own dynamics.
For finite-dimensional (nonlinear) systems, dynamic controllers are already well known and can be used for many purposes \cite{isidoriNonlinearControlSystems1995}.
Among others, they are capable of output tracking and disturbance rejection by using an internal model.
Another application is input-output decoupling of multi-input multi-output systems, which is in general not feasible with a static state feedback.
\changes{Observer-based compensators are also dynamic feedback controllers.
In contrast to dynamic \emph{state} feedback controllers considered in this paper, observer-based compensators use the system \emph{output} as input to the dynamic controller.
Therefore, they are of no further interest here.}

For infinite-dimensional systems, very few dynamic state feedback controllers can be found in the literature and most of them are finite-dimensional, see, e.g., \cite{ramirezExponentialStabilizationBoundary2014} for a port-Hamiltonian approach.
Regarding infinite-dimensional dynamic controllers, a breakthrough was achieved in \cite{redaudIndomainDampingAssignment2022} and \cite{redaudDomainDissipationAssignment2024}, where the port-Hamiltonian framework is combined with backstepping for hyperbolic systems.
By including delays in the control law, an additional controller dynamics is introduced implicitly.
This way, the choice of admissible target systems of the backstepping transformation is less restricted compared to when using a static state feedback.
Based on these results, controllers that build on dynamic extensions and backstepping were introduced in \cite{gehringUsingDynamicExtensions2025} for heterodirectional hyperbolic systems.
The idea of these dynamic controllers is to homogenize the transport delays by introducing a dynamic extension, which corresponds to the controller dynamics.

To the best of our knowledge, there do not exist infinite-dimensional dynamic state feedback controllers for general linear diffusion-reaction systems so far.
While it is not obvious how the design in \cite{redaudIndomainDampingAssignment2022} and \cite{redaudDomainDissipationAssignment2024} can be transferred to parabolic partial differential equations (PDEs), the main ideas from \cite{gehringUsingDynamicExtensions2025} can systematically be extended to diffusion-reaction systems, which is done here.
This not only allows stabilizing systems that are coupled at the unactuated boundary, which is not possible with existing backstepping-based controllers, but offers significantly more freedom in the choice of the closed-loop dynamics.
In particular, no additional terms have to be introduced in the target system, i.e., the closed-loop dynamics, to ensure well-posedness of the kernel equations.

\vspace{1ex}
\emph{Notation:}
For ease of notation, we write partial derivatives of a function $f$ w.r.t. its arguments as $\partial_z f(z,\zeta)$ and $\partial_\zeta f(z,\zeta)$.
The total derivative w.r.t. $z$ is denoted as $\mathrm{d}_z$.
If a function depends on only one variable, we use the notation $f'(z) = \dd_z f(z)$ for its derivative.

\section{SYSTEM DESCRIPTION AND OBJECTIVE}
\label{sec:problem_statement}
First, we introduce the system under consideration and we describe the backstepping-based design of static state feedback based on \cite{deutscherBacksteppingControlCoupled2018} as well as its limitations, which motivate the use of dynamic state feedback.

\subsection{System Definition}
\label{subsec:sys_orig_def}
We consider the general diffusion-reaction system
\begin{subequations}
\label{eq:sys_orig}
\begin{align}
        \partial_t x(z,t) &= \Lambda(z) \partial_z^2 x(z,t) + A(z) x(z,t) \label{eq:sys_orig_PDE} \\
        \partial_z x(0,t) &= Q_0 x(0,t) \label{eq:sys_orig_BC0} \\
        \partial_z x(1,t) &= Q_1 x(1,t) + u(t) \label{eq:sys_orig_BC1}
\end{align}
\end{subequations}
defined on the unit length interval with $z \in [0,1]$ and $t \geq 0$, where $x(z,t) \in \Rset^n$ is the state with the initial condition $x(z,0) = x_0(z)$ and $u(t) \in \Rset^n$ is the control input.
The matrix $\Lambda = \diag\left(\lambda_1,\dots,\lambda_n\right) \in \left(C^2([0,1])\right)^{n \times n}$ comprises the diffusion coefficients $\lambda_i$ that are sorted in descending order, i.e., $\lambda_1(z) > \dots > \lambda_n(z) > 0 , \, \forall z \in [0,1]$.
The reaction matrix $A \in \left(C^1([0,1])\right)^{n \times n}$ leads to an in-domain coupling of the diffusion processes.
We consider Robin boundary conditions (BCs) with boundary coupling matrices $Q_0$, $Q_1 \in \Rset^{n \times n}$, although the design in Section~\ref{sec:dyn_cont} extends to Dirichlet or Neumann BCs in an analog way.

\subsection{Static State Feedback}
\label{subsec:static_controller}
For the design of a static state feedback, we follow \cite{deutscherBacksteppingControlCoupled2018}, where $Q_0$ in \eqref{eq:sys_orig_BC0} is restricted to being a diagonal matrix.
To stabilize system \eqref{eq:sys_orig}, it is transformed into the target system
\begin{subequations}
        \label{eq:sys_target_static}
        \begin{align}
                \partial_t \tilde x(z,t) &= \Lambda(z) \partial_z^2 \tilde x(z,t) - \mu \tilde x(z,t) + A_0(z) \tilde x(0,t) \label{eq:sys_target_static_PDE} \\
                \partial_z \tilde x(0,t) &= 0 \label{eq:sys_target_static_BC0} \\
                \partial_z \tilde x(1,t) &= \bar v(t) \label{eq:sys_target_static_BC1}
        \end{align}
\end{subequations}
using the Volterra integral transformation
\begin{align}
        \tilde x(z,t) &= x(z,t) - \int_0^z K(z,\zeta) x(\zeta,t) \dint{\zeta}, \label{eq:Volterra_static}
\end{align}
with the kernel $K(z,\zeta) \in \Rset^{n \times n}$ defined on $0 \leq \zeta \leq z \leq 1$, and the static state feedback
\begin{align}
        \begin{split}
        u(t) &= \bar v(t) -(Q_1 - K(1,1)) x(1,t) \\ \qquad &+ \int_0^1 \partial_z K(1,\zeta) x(\zeta,t) \dint{\zeta}. \label{eq:static_control_law}
        \end{split}
\end{align}
In order to render system \eqref{eq:sys_target_static} exponentially stable in the (weighted) $L_2$ norm, the design parameter $\mu > 0$ is used to prescribe the decay rate and the input $\bar v(t)$ is set to zero. 

It is shown in \cite{deutscherBacksteppingControlCoupled2018} that the kernel equations for $K(z,\zeta)$ are well-posed and admit a unique piecewise $C^2$-solution if the matrix $A_0(z) \in \Rset^{n \times n}$ in \eqref{eq:sys_target_static_PDE} is a strictly lower triangular matrix, $Q_0$ is diagonal and additional BCs for well-posedness of the kernel equations are imposed.
The nonzero entries of $A_0(z)$ are defined by the kernel.
Due to its structure, this matrix leads to a coupling between the different equations in \eqref{eq:sys_target_static}.
In the equi-diffusivity case $\lambda_1(z) = \dots = \lambda_n(z) , \, \forall z \in [0,1]$, the matrix $A_0(z)$ is not necessary to ensure well-posedness \cite{baccoliBoundaryControlCoupled2014}, i.e., $A_0(z) = 0$.

\subsection{Dynamic State Feedback}
\label{subsec:Motivation}
Since the nonzero entries of $A_0(z)$ are defined by the kernel, we cannot assign an arbitrary target system \eqref{eq:sys_target_static}, which restricts the dynamics of how the state converges in the closed loop.
To overcome this limitation that is caused by the mutually different diffusion coefficients, we use a dynamic state feedback.
The main idea of this dynamic controller design is to ho\-mo\-ge\-nize the diffusion coefficients by transforming the spatial domain such that all diffusion coefficients are identical to the smallest coefficient $\lambda_n$, resulting in equi-diffusivity of all diffusion processes.
The design of a dynamic state feedback, with a controller dynamics that follows from a dynamic extension of \eqref{eq:sys_orig}, involves four steps:
\begin{enumerate}
        \item A preliminary transformation eliminates the in-domain coupling in \eqref{eq:sys_orig_PDE} to simplify the introduction of the dynamic extension.
        \item The diffusion coefficients are homogenized by scaling the spatial domains of each state component individually so that all diffusion processes share the same diffusion coefficient $\lambda_n$.
        \item The scaled domains are extended to the full domain $[0,1]$ by introducing a dynamic extension.
        \item A static state feedback for the dynamically extended system is designed using backstepping, which results in a dynamic state feedback for the original system.
\end{enumerate}
This design does not require any restriction of $Q_0$, e.g., being diagonal like in \cite{deutscherBacksteppingControlCoupled2018}.
Therefore, we can stabilize a larger class of diffusion-reaction systems, especially systems that are coupled at the unactuated boundary.
Further, it allows assigning a general closed-loop dynamics, which can be used to achieve decoupled input-output transfer behavior.

\section{DESIGN OF DYNAMIC CONTROLLERS}
\label{sec:dyn_cont}
The following design of a dynamic state feedback is split into the four steps listed in Section~\ref{subsec:Motivation}.

\subsection{Preliminary Transformation}
\label{subsec:decoup_trafo}
To simplify the introduction of the dynamic extension, the in-domain coupling due to the reaction term $A(z) x(z,t)$ in \eqref{eq:sys_orig_PDE} is eliminated.
For this, using the Volterra integral transformation
\begin{align}
        \tilde x(z,t) &= x(z,t) - \int_0^z K(z,\zeta) x(\zeta,t) \dint{\zeta}, \label{eq:Volterra_dynamic_prelim}
\end{align}
with the kernel $K(z,\zeta) \in \Rset^{n \times n}$ defined on $0 \leq \zeta \leq z \leq 1$, and introducing the new input
\begin{align}
\begin{split}
        \tilde u(t) &\coloneq u(t) + (Q_1 - K(1,1)) x(1,t) \\ & \qquad - \int_{0}^{1} \partial_z K(1,\zeta) x(\zeta,t) \dint{\zeta}, \label{eq:bs_trafo_1_input}
\end{split}
\end{align}
system \eqref{eq:sys_orig} is transformed into the form
\begin{subequations}
        \label{eq:sys_decoup}
        \begin{align}
                \tilde x(z,t) &= \Lambda(z) \partial_z^2 \tilde x(z,t) + A_0(z) \tilde x(0,t) \label{eq:sys_decoup_PDE} \\
                \partial_z \tilde x(0,t) &= Q_0 \tilde x(0,t) \label{eq:sys_decoup_BC0} \\
                \partial_z \tilde x(1,t) &= \tilde u(t). \label{eq:sys_decoup_BC1}
        \end{align}
\end{subequations}
Note that, contrary to \eqref{eq:sys_target_static}, the dynamics \eqref{eq:sys_decoup} does not contain a stabilizing term $-\mu \tilde x(z,t)$ as stabilization is not the objective (yet).
The kernel $K(z,\zeta)$ in \eqref{eq:Volterra_dynamic_prelim} satisfies
\begin{subequations}
        \label{eq:sys_decoup_kernel_equations}
	\begin{align}
		& \Lambda(z) \partial_z^2 K(z,\zeta) - \dd_\zeta^2(K(z,\zeta)\Lambda(\zeta)) = K(z,\zeta) A(\zeta) \label{eq:sys_decoup_ke1} \\
		\begin{split}
                        & \Lambda(z)\dd_z K(z,z) + \Lambda(z)\partial_z K(z,z) \\ & \qquad \quad + \partial_\zeta K(z,z) \Lambda(z) + K(z,z)\Lambda'(z) = -A(z)  \label{eq:sys_decoup_ke2}
                \end{split} \\
		& K(z,z)\Lambda(z) - \Lambda(z)K(z,z) = 0 \label{eq:sys_decoup_ke3} \\
		& \partial_\zeta K(z,0)\Lambda(0) + K(z,0)(\Lambda'(0)-\Lambda(0)Q_0) = -A_0(z) \label{eq:sys_decoup_ke4} \\
		& K(0,0) = 0 \label{eq:sys_decoup_ke5}
	\end{align}
\end{subequations}
on $0 \leq \zeta \leq z \leq 1$.
As in Section~\ref{subsec:static_controller}, a strictly lower triangular matrix $A_0(z) \in \Rset^{n \times n}$, whose nonzero elements are defined by \eqref{eq:sys_decoup_ke4}, is necessary to ensure well-posedness of the kernel equations \eqref{eq:sys_decoup_kernel_equations}.
Existence of a unique piecewise $C^2$-solution is guaranteed by \cite{deutscherBacksteppingControlCoupled2018}.
Note that, in contrast to \cite{deutscherBacksteppingControlCoupled2018}, no restrictions need to be imposed on the boundary coupling matrix $Q_0$ due to the choice of \eqref{eq:sys_decoup_ke5}, allowing for a larger class of systems to be controlled.

\subsection{Homogenization}
\label{subsec:homogenization}
\begin{figure}
        \begin{center}
                        \begin{tikzpicture}[scale=.75]
                        
                                \draw[{Bar[]}-{Bar[]}, thick] (0,0.5) -- (10,0.5) 
                                        node[at start, below=2pt] {$0$}  
                                        node[at end, below=2pt] {$1$};   

                                \coordinate (A1) at (10,2.5);
                                \coordinate (A2) at ( 6,2.5);
                                
                                \coordinate (A) at (A1);
                                \coordinate (A) at (A2);

                                \coordinate (B) at (10,1);

                                \draw [thick, blue, line width=2pt] (0,2.5) -- (A2) node[midway, above] {$\bar x_1(\zbar_1,t)$};
                                \draw [thick, blue, line width=2pt] (0,1) -- (B)  node[midway, above] {$\bar x_2(\zbar_2,t)$};

                                \draw [<-, thick, blue] (A2)+(0,1pt) -- +(0,0.75) node[at end, above=-5pt] {$\tilde u_1(t)$};
                                \draw [dashed, thick, gray] (A2) -- +(0,-2) node[at end, below=2pt] {$\sigma_1(1)$};

                                \draw [thick, orange, line width=2pt] (A2) -- (A1) node[midway, above] {$\bar w_1(\zbar_1,t)$};
                                \draw [<-, thick, orange] (A1)+(0,1pt) -- +(0,0.75) node[at end, above=-5pt] {$v_{1}(t)$};
                                \draw[<-, thick, blue] (B)+(0,1pt) -- +(0,0.75) node[at end, above=-5pt] {$\tilde u_2(t) \mathrel{\textcolor{black}{=}} \highlighteq{orange}{v_{2}(t)}$};
                                
                        \end{tikzpicture}%
        \end{center}
        \caption{Sketch of the idea of homogenization and dynamic extensions for a system \eqref{eq:sys_decoup} with $n=2$. The state component $\bar x_1(\zbar_1,t)$ is defined on the scaled domain $[0,\sigma_1(1)]$ and complemented with the controller state $\bar w_1(\zbar_1,t)$ on $[\sigma_1(1),1]$. The second state component $\bar x_2(\zbar_2,t) = \tilde x_2(z,t)$ remains unchanged. At $z=1$, a new input $v(t)$ is introduced.}
        \label{fig:sketch}
\end{figure}

In order to homogenize the diffusion coefficients in \eqref{eq:sys_decoup} in such a way that all diffusion processes share the same diffusion coefficient $\lambda_n$, i.e., the smallest coefficient, we introduce new spatial coordinates $\zbar_i, i=1,\dots,n$ for each individual state component $\tilde x_i(z,t)$.
The idea is to equivalently represent the diffusion of $\tilde x_i(z,t)$ on $[0,1]$ by a new state component
\begin{align}
        \bar x_i(\zbar_i,t) \coloneq \tilde x_i(z,t), \qquad i = 1,\dots,n, \label{eq:homogenization}
\end{align}
defined pointwise on a smaller domain.
The domain of the last state component remains unscaled because the $n$th equation in \eqref{eq:sys_decoup_PDE} already has the diffusion coefficient $\lambda_n$, and therefore $\zbar_n = z$.
This is illustrated for an example with $n=2$ in Figure~\ref{fig:sketch}.

Define for $i=1,\dots,n$ the strictly positive, monotonically increasing function
\begin{align}
        \phi_i(z) \coloneq \int_0^z \frac{1}{\sqrt{\lambda_i(\zeta)}} \dint{\zeta}
\end{align}
with its inverse $\psi_i$ satisfying $\psi_i(\phi_i(z)) = z$.
Then the spatial coordinate is transformed according to
\begin{align}
        \zbar_i = \sigma_i(z) \coloneq \psi_n(\phi_i(z)), \label{eq:zbar}
\end{align}
where $\tau_i$ denotes the inverse transformation of $\sigma_i$ such that $\tau_i(\sigma_i(z)) = z$.
Note that $\sigma_i(0) = 0$ and $0 < \sigma_i(1) \leq 1$ for all $i = 1, \dots, n$ with only $\sigma_n(1) = 1$.


Each state component $\bar x_i(\zbar_i,t)$ is now defined on an interval $[0,\sigma_i(1)]$ of different length.
This way, every diffusion process now has the same diffusion coefficient $\lambda_n$.
Inserting \eqref{eq:homogenization} and its derivatives into \eqref{eq:sys_decoup}, with $\sigma_i'(z) = \sqrt{\tfrac{\lambda_n(\sigma_i(z))}{\lambda_i(z)}}$ in view of \eqref{eq:zbar}, yields the system
\begin{subequations}
        \label{eq:sys_homo}
	\begin{align}
                \begin{split}
		\partial_t \bar x_i(\bar z_i,t) &= \lambda_n(\bar z_i) \partial_{\bar z_i}^2 \bar x_i(\bar z_i,t) + d_i(\bar z_i) \partial_{\bar z_i} \bar x_i(\bar z_i,t) \\ & \qquad + \mathrm{e}_i^\top A_0(\tau_i(\bar z_i)) \bar x(0,t), \quad \zbar_i \in (0,\sigma_i(1)) \label{eq:sys_homo_PDE}
                \end{split}\\
		\partial_{\bar z_i} \bar x_i(0,t) &= \mathrm{e}_i^\top \bar Q_0 \bar x(0,t) \label{eq:sys_homo_BC0} \\
		\partial_{\bar z_i} \bar x_i(\sigma_i(1),t) &= \sqrt{\tfrac{\lambda_i(1)}{\lambda_n(\sigma_i(1))}} \tilde u_i(t), \label{eq:sys_homo_BCsigma1}
	\end{align}
\end{subequations}
$i=1,\dots,n$ with homogenized diffusion coefficients, where $\rme_i^\top \bar Q_0 \coloneq \rme_i^\top \sqrt{\tfrac{\lambda_i(0)}{\lambda_n(0)}} Q_0$ \changes{with $\rme_i$ the $i$th unit vector in $\Rset^n$} and $d_i(\zbar_i) \coloneq \lambda_i(\tau_i(\bar z_i))\sigma_i''(\tau_i(\bar z_i))$.

\subsection{Dynamic Extension}
\label{subsec:dyn_ext}
As the first $i = 1,\dots,n-1$ state components $\bar x_i(\zbar_i,t)$ in \eqref{eq:sys_homo} do not span the ``whole'' interval $[0,1]$ anymore, we introduce state components $\bar w_i(\zbar_i,t)$ of a dynamic extension on the interval $[\sigma_i(1),1]$ so that the composite state\footnote{We use $z$ for terms defined on the interval $[0,1]$ and $\zbar_i$ for expressions, where each component is defined on a different interval.}%
\begin{subequations}%
        \label{eq:chi}%
        \begin{align}%
                \chi_i(z,t) &= \begin{cases}
                        \bar x_i(z,t), & z \in [0,\sigma_i(1)] \\
                        \bar w_i(z,t), & z \in (\sigma_i(1),1]
                \end{cases} \\
                \chi_n(z,t) &= \bar x_n(z,t), \qquad z \in [0,1] \label{eq:chi_n}
        \end{align}
\end{subequations}
with $\chi(z,t) \in \Rset^n$ spans the whole domain $[0,1]$ again (see Figure~\ref{fig:sketch}).
Since the last state component $\bar x_n(\zbar_n,t)$ remains unchanged in view of $\sigma_n(z) = z$, the controller state $\bar w_i(\zbar_i,t)$, $i=1,\dots,n-1$ only has $n-1$ components.

For the composite system with state \eqref{eq:chi} to have ho\-mo\-ge\-nized diffusion coefficients $\lambda_n$ on the domain $[0,1]$, we choose the special dynamics
\begin{align}
        \partial_t \bar w_i(\zbar_i,t) &= \lambda_n(\zbar_i) \partial_{\zbar_i}^2 \bar w_i(\zbar_i,t), \qquad \zbar_i \in (\sigma_i(1),1) \label{eq:dyn_ext_PDE}
\end{align}
for the controller state, with initial condition $\bar w_i(\zbar_i,0) = \bar w_{i,0}(\zbar_i)$ for $i=1,\dots,n-1$.
\begin{remark}
        \label{rmk:dyn_ext}
        The choice of the controller dynamics is a design parameter, e.g., \eqref{eq:dyn_ext_PDE} could at least include additional terms depending on $\bar w_i(\zbar_i,t)$ and $\bar x_i(0,t)$.
\end{remark}
To connect the homogenized system \eqref{eq:sys_homo} and the controller PDEs in \eqref{eq:dyn_ext_PDE}, we require
\begin{subequations}
\begin{align}
        \bar w_i(\sigma_i(1),t) &\overset{!}{=} \bar x_i(\sigma_i(1),t) \label{eq:dyn_ext_BCsigma1} \\
        \partial_{\zbar_i} \bar w_i(\sigma_i(1),t) &\overset{!}{=} \partial_{\zbar_i} \bar x_i(\sigma_i(1),t) \label{eq:dyn_ext_BCsigma1_deriv}
\end{align}
\end{subequations}
for the BCs connecting both dynamics to ensure well-posedness of the resulting system.
While \eqref{eq:dyn_ext_BCsigma1} is a BC of \eqref{eq:dyn_ext_PDE}, \eqref{eq:dyn_ext_BCsigma1_deriv} together with \eqref{eq:sys_homo_BCsigma1} defines the input components
\begin{align}
        \tilde u_i(t) &= \sqrt{\tfrac{\lambda_n(\sigma_i(1))}{\lambda_i(1)}} \partial_{\zbar_i} \bar w_i(\sigma_i(1),t) \label{eq:utilde_BC}
\end{align}
for $i = 1,\dots,n-1$ in terms of the controller state.
At the boundary $\zbar_i=1$, we introduce the new input $v(t)$ (see Figure~\ref{fig:sketch}) such that
\begin{subequations}
        \label{eq:v_BC}
        \begin{align}
                \partial_{\zbar_i} \bar w_i(1,t) &= v_i(t), \qquad i = 1,\dots,n-1 \label{eq:v_BC1} \\
                \tilde u_n(t) &= v_n(t). \label{eq:v_BC2}
        \end{align}
\end{subequations}
In summary, the controller dynamics consists of the PDEs in \eqref{eq:dyn_ext_PDE} with the BCs \eqref{eq:dyn_ext_BCsigma1} and \eqref{eq:v_BC1}.

The dynamically extended system
\begin{subequations}
        \label{eq:sys_ext}
        \begin{align}
                \begin{split}
                        \partial_t \chi(z,t) &= \lambda_n(z) \partial_z^2 \chi(z,t) + \bar D(z) \partial_z \chi(z,t) \\ & \qquad + \bar A_0(z) \chi(0,t)
                \end{split} \label{eq:sys_ext_PDE} \\
                \partial_z \chi(0,t) &= \bar Q_0 \chi(0,t) \label{eq:sys_ext_BC0} \\
                \partial_z \chi(1,t) &= v(t) \label{eq:sys_ext_BC1}
        \end{align}
\end{subequations}
with the state $\chi(z,t)$ defined in \eqref{eq:chi} directly follows from the controller dynamics and the homogenized system \eqref{eq:sys_homo} in view of \eqref{eq:utilde_BC} and \eqref{eq:v_BC2}, with matrices $\bar D = \diag\left(\bar d_1, \dots, \bar d_n\right)$ and $\bar A_0(z)$ defined piecewise by
\begin{subequations}
        \label{eq:sys_ext_definitions}
        \begin{align}
                \bar d_i(z) &= \begin{cases}
                        d_i(z), & z \in [0,\sigma_i(1)] \\
                        0, & z \in (\sigma_i(1),1]
                \end{cases}, \,\, i = 1,\dots,n-1 \\
                \bar d_n(z) &= d_n(z) = 0 \\
                \rme_i^\top \bar A_0(z) &= \begin{cases}
                        \rme_i^\top A_0(\tau_i(z)), & z \in [0,\sigma_i(1)] \\
                        0, & z \in (\sigma_i(1),1]
                \end{cases}, \,\, i = 1,\dots,n.
        \end{align}
\end{subequations}

\subsection{Backstepping Controller}
\label{subsec:backstepping_dynamic}
The advection term $\bar D(z) \partial_z \chi(z,t)$ in \eqref{eq:sys_ext_PDE} is a direct consequence of the scaling of the spatial domains.
To remove it, we use the boundedly-invertible Hopf-Cole-type state transformation (see, e.g., \cite{smyshlyaevClosedFormBoundaryState2004}, \cite{deutscherBacksteppingControlCoupled2018})
\begin{align}
        \begin{split}
                \tilde \chi(z,t) &= \underset{i=1,\dots,n}{\diag} \left(\exp \left(\frac{1}{2} \int_0^z \frac{\bar d_i(\zeta)}{\lambda_n(\zeta)} \dint{\zeta}\right)\right) \chi(z,t) \\ &\eqcolon \Phi(z) \chi(z,t). \label{eq:hopf_cole}
        \end{split}
\end{align}
Applying \eqref{eq:hopf_cole} to \eqref{eq:sys_ext} results in
\begin{subequations}
        \label{eq:sys_hc}
        \begin{align}
                \begin{split}
                        \partial_t \tilde \chi(z,t) &= \lambda_n(z) \partial_z^2 \tilde \chi(z,t) + \bar A(z) \tilde \chi(z,t) \\ & \qquad + \Phi(z) \bar A_0(z) \tilde \chi(0,t)
                \end{split} \label{eq:sys_hc_PDE} \\
                \partial_z \tilde \chi(0,t) &= \left(\bar Q_0 + \Phi'(0)\right) \tilde \chi(0,t) \label{eq:sys_hc_BC0} \\
                \partial_z \tilde \chi(1,t) &= \Phi(1) v(t) \label{eq:sys_hc_BC1}
        \end{align}
\end{subequations}
with $\bar A(z) \coloneq - \lambda_n(z) \Phi''(z) \Phi^{-1}(z)$.
\changes{Note that with the specific choice \eqref{eq:dyn_ext_PDE} for the controller dynamics, it follows from \eqref{eq:sys_ext_definitions} that $\bar D(1) = 0$ and, thus, $\Phi'(1) = 0$.}

Using the (inverse) Volterra integral transformation\footnote{The kernel equations for the inverse kernel take on a simpler form to solve.}
\begin{align}
        \tilde \chi(z,t) &= \bar \chi(z,t) + \int_0^z L(z,\zeta) \bar \chi(\zeta,t) \dint{\zeta} \label{eq:Volterra_dynamic}
\end{align}
with the kernel $L(z,\zeta) \in \Rset^{n \times n}$ defined on $0 \leq \zeta \leq z \leq 1$, system \eqref{eq:sys_hc} is mapped into the desired target system
\begin{subequations}
        \label{eq:sys_target_dynamic}
        \begin{align}
                \partial_t \bar \chi(z,t) &= \lambda_n(z) \partial_z^2 \bar \chi(z,t) + B(z) \bar \chi(z,t) \label{eq:target_dynamic_PDE} \\
                \partial_z \bar \chi(0,t) &= B_0 \bar \chi(0,t) \label{eq:target_dynamic_BC0} \\
                \partial_z \bar \chi(1,t) &= \bar v(t) \label{eq:target_dynamic_BC1}
        \end{align}
\end{subequations}
by applying the control law
\begin{align}
        \begin{split}
                v(t) &= \Phi^{-1}(\changes{1}) \bigg[ \bar v(t) + L(1,1) \bar \chi(1,t) \\ & \qquad \qquad + \int_0^1 \partial_z L(1,\zeta) \bar \chi(\zeta,t) \dint{\zeta} \bigg] \label{eq:control_law_v}
        \end{split}
\end{align}
in \eqref{eq:sys_hc_BC1}.
The matrices $B \in C^1([0,1])^{n \times n}$ and $B_0 \in \Rset^{n \times n}$ as well as the input $\bar v(t) \in \Rset^n$ in \eqref{eq:sys_target_dynamic} constitute design parameters.
Standard calculations reveal that the kernel $L(z,\zeta)$ has to satisfy the kernel equations
\begin{subequations}
        \label{eq:kernel_equations_dynamic}
        \begin{align}
                \begin{split}
                        &\lambda_n(z) \partial_z^2 L(z,\zeta) - \dd_\zeta^2 \left(L(z,\zeta) \lambda_n(\zeta)\right)  \\ & \qquad \qquad = L(z,\zeta) B(\zeta) - \bar A(z) L(z,\zeta)
                \end{split} \\
                &2 \lambda_n(z) \dd_z L(z,z) + \lambda_n'(z) L(z,z) = B(z) - \bar A(z) \\
                \begin{split}
                        &\lambda_n(0) (\partial_\zeta L(z,0) - L(z,0) B_0) + \lambda_n'(0) L(z,0) \\ & \qquad \qquad = \Phi(z) \bar A_0(z)
                \end{split}  \label{eq:kernel_dynamic_A0}\\
                &L(0,0) = \bar Q_0 + \Phi'(0) - B_0 \label{eq:kernel_dynamic_00}
        \end{align}
\end{subequations}
on $0 \leq \zeta \leq z \leq 1$.
Thanks to the homogenized diffusion coefficients, existence of a unique piecewise $C^2$-solution $L(z,\zeta)$ can be shown similar to the equi-diffusivity case in \cite{baccoliBoundaryControlCoupled2015}, where constant coefficients as well as $B_0 = 0$ and $L(0,0) = 0$ are considered.
The solution $L(z,\zeta)$ is found via transformation into canonical coordinates and a successive approximation of the resulting integral equations like in \cite{baccoliBoundaryControlCoupled2015}.
For diagonal matrices $B(z)$ and $B_0$, the kernel equations \eqref{eq:kernel_equations_dynamic} are $n \times n$ independent scalar kernel equations as in \cite{krsticBoundaryControlPDEs2008}.
Note that, contrary to the kernel equations \eqref{eq:sys_decoup_kernel_equations} with mutually different diffusion coefficients, no additional terms are required to ensure well-posedness.
This allows for more freedom in choosing the desired target system \eqref{eq:sys_target_dynamic}.
Furthermore, there is no restriction on the form of $Q_0$.
        
For the target system \eqref{eq:sys_target_dynamic} to be exponentially stable in the $L_2$ norm, one possible choice for the design parameters is $\bar v(t) = 0$, $B_0 = 0$ and $B(z) = - \mu I$ with $\mu > 0$ and identity matrix $I$.
However, stability of the target system can also be achieved with non-diagonal $B(z)$.
For a system with constant diffusion coefficients, this is shown in \cite{baccoliBoundaryControlCoupled2014} for a constant matrix $B$ with positive definite symmetric part.

\begin{theorem}[Closed-loop stability]
        Let $B(z)$, $B_0$ and $\bar v(t)$ be chosen such that the target system \eqref{eq:sys_target_dynamic} is well-posed and exponentially stable in the (weighted) $L_2$ norm of $\bar \chi(\cdot,t)$.
        Then the $L_2$ norm related to the state $x(\cdot,t)$ of system \eqref{eq:sys_orig} converges exponentially to zero if the control input $u(t)$ satisfies \eqref{eq:control_law_v}, \eqref{eq:utilde_BC}, \eqref{eq:v_BC2}, \eqref{eq:bs_trafo_1_input} with the controller dynamics \eqref{eq:dyn_ext_PDE}, \eqref{eq:dyn_ext_BCsigma1}, \eqref{eq:v_BC1}, and $L_2$ initial conditions for system and controller state are compatible with the boundary conditions of the closed loop.
\end{theorem}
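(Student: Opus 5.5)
The approach is to show that the chain of transformations linking $x$ to $\bar\chi$ consists of boundedly invertible maps on $L_2$ spaces. Exponential decay of $\|\bar\chi(\cdot,t)\|$ then transfers back to $\|x(\cdot,t)\|$, and also to the controller state $\bar w$. The chain, read backwards, is
\[
\bar\chi \;\mapsto\; \tilde\chi \;\mapsto\; \chi \;\mapsto\; (\bar x,\bar w) \;\mapsto\; \tilde x \;\mapsto\; x .
\]
The proof then consists of a sequence of norm equivalences, each established separately. Finally, we argue that the closed loop defined by \eqref{eq:control_law_v}, \eqref{eq:utilde_BC}, \eqref{eq:v_BC2}, \eqref{eq:bs_trafo_1_input} and the controller dynamics coincides with the image of the target system \eqref{eq:sys_target_dynamic} under this chain.

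\textbf{Step 1: equivalence of dynamics.} For compatible initial data, the closed-loop solution $(x,\bar w)$ is mapped to a solution $\bar\chi$ of \eqref{eq:sys_target_dynamic}, and conversely. The argument runs through the construction in reverse.
\begin{itemize}
\item The input \eqref{eq:bs_trafo_1_input} together with \eqref{eq:Volterra_dynamic_prelim} gives \eqref{eq:sys_decoup}, since the kernel $K$ solves \eqref{eq:sys_decoup_kernel_equations}.
\item The coordinate change \eqref{eq:homogenization} gives \eqref{eq:sys_homo}.
\item Gluing with $\bar w$ via \eqref{eq:dyn_ext_BCsigma1} and \eqref{eq:utilde_BC} (equivalently \eqref{eq:dyn_ext_BCsigma1_deriv}) gives \eqref{eq:sys_ext}.
\item The transformation \eqref{eq:hopf_cole} gives \eqref{eq:sys_hc}.
\item The transformation \eqref{eq:Volterra_dynamic}, with $L$ solving \eqref{eq:kernel_equations_dynamic} and $v$ given by \eqref{eq:control_law_v}, gives \eqref{eq:sys_target_dynamic}.
\end{itemize}
Each of these steps is the standard backstepping calculation already indicated in the text. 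The compatibility assumption on the initial conditions ensures that the transformed initial datum $\bar\chi(\cdot,0)$ is admissible for the well-posed target system. Uniqueness then gives $\bar\chi$ for all $t\ge0$.

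\textbf{Step 2: bounded invertibility of each map.}
\begin{itemize}
\item The Volterra transformations \eqref{eq:Volterra_dynamic_prelim} and \eqref{eq:Volterra_dynamic} have piecewise $C^2$, hence bounded, kernels. Volterra operators of the second kind with bounded kernels are boundedly invertible on $L_2(0,1)^n$, with inverse kernels obtained by the Neumann series / resolvent kernel. This yields constants $c_1,c_2$ with $\|\tilde\chi\|\le c_1\|\bar\chi\|$ and $\|x\|\le c_2\|\tilde x\|$.
\item $\Phi$ in \eqref{eq:hopf_cole} is a diagonal matrix function that is continuous and positive on $[0,1]$, so $\Phi$ and $\Phi^{-1}$ are bounded multiplication operators.
\item The splitting \eqref{eq:chi} is isometric: $\|\chi\|^2=\sum_i\bigl(\|\bar x_i\|^2_{L_2(0,\sigma_i(1))}+\|\bar w_i\|^2_{L_2(\sigma_i(1),1)}\bigr)$.
\item The coordinate change \eqref{eq:homogenization} has $C^1$ diffeomorphisms $\sigma_i$ with $\sigma_i'$ bounded above and below. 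By change of variables, $\int_0^1|\tilde x_i(z)|^2\,dz=\int_0^{\sigma_i(1)}|\bar x_i(\zbar)|^2\tau_i'(\zbar)\,d\zbar$, so the norms are equivalent.
\end{itemize}
Chaining these gives $\|x(\cdot,t)\|+\|\bar w(\cdot,t)\|\le M\|\bar\chi(\cdot,t)\|\le M' e^{-\alpha t}\|\bar\chi(\cdot,0)\|$. Applying the forward maps yields $\|\bar\chi(\cdot,0)\|\le M''(\|x_0\|+\|\bar w_0\|)$. If the target system's stability is stated in a weighted $L_2$ norm, that norm is equivalent to the unweighted one because the weights are bounded above and below.

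\textbf{Main obstacle.} I expect the main difficulty to be Step 1's justification that the composite state $\chi$ is a genuine solution of \eqref{eq:sys_ext} across the interfaces $z=\sigma_i(1)$. The coefficients $\bar D$ and $\bar A_0$ are only piecewise continuous there, so $\chi$, $\tilde\chi$ and $L$ are only piecewise $C^2$. The backstepping computations must therefore be carried out on each subinterval. One then checks that the interface terms arising from integration by parts cancel, using continuity of $\chi_i$ and $\partial_z\chi_i$ at $\sigma_i(1)$ from \eqref{eq:dyn_ext_BCsigma1}--\eqref{eq:dyn_ext_BCsigma1_deriv}.

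I would handle this in the weak (variational) formulation, where only $H^1$ regularity with matching traces is needed. This also suits the assumption that the initial data lie only in $L_2$. In that setting, the transformation identities are first proved for smooth compatible data and then extended by density together with the uniform bounds from Step 2.
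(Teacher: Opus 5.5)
Your proposal is correct and takes essentially the same route as the paper's proof: you transfer the decay of $\lVert\bar\chi(\cdot,t)\rVert$ back to $x$ and $\bar w$ through the boundedly invertible maps \eqref{eq:Volterra_dynamic}, \eqref{eq:hopf_cole}, the isometric splitting \eqref{eq:chi}, the coordinate change \eqref{eq:homogenization} and \eqref{eq:Volterra_dynamic_prelim}. Your additions go beyond the paper without changing the approach: the equivalence-of-dynamics step, which the paper takes as given by the construction, and the explicit constants $M,M',M''$, which actually yield the exponential rate claimed in the statement, whereas the paper's proof only records $\lim_{t\to\infty}\lVert x(\cdot,t)\rVert=0$.
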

\begin{proof}
        As $B(z)$, $B_0$ and $\bar v(t)$ are chosen such that the target system \eqref{eq:sys_target_dynamic} is well-posed so that compatible initial conditions ensure existence of a classical solution, the exponential stability in the $L_2$ norm\footnote{For simplicity, the notation $\lVert \cdot \rVert$ is used for the (weighted) $L_2$ norm of all states, even though the definition of the individual norm depends on the corresponding state space.} yields $\lim_{t \rightarrow \infty} \lVert \bar \chi(\cdot,t) \rVert = 0$.
        From the bounded in\-ver\-ti\-bi\-li\-ty of the transformations \eqref{eq:Volterra_dynamic} and \eqref{eq:hopf_cole}, it can be inferred that $\lim_{t \rightarrow \infty} \lVert \chi(\cdot,t) \rVert = 0$.
        Then \eqref{eq:chi} implies that $\lim_{t \rightarrow \infty} \lVert \bar x_i(\cdot,t) \rVert = 0, \, i=1,\dots,n$ and $\lim_{t \rightarrow \infty} \lVert \bar w_i(\cdot,t) \rVert = 0, \, i=1,\dots,n-1$.
        Consequently, $\lim_{t \rightarrow \infty} \lVert x(\cdot,t) \rVert = 0$ due to \eqref{eq:homogenization} and the bounded invertibility of \eqref{eq:Volterra_dynamic_prelim}.
\end{proof}

Note that diagonal matrices $B(z)$ and $B_0$ result in a system of decoupled diffusion-reaction equations.
This results in a decoupled input-output transfer behavior of \eqref{eq:sys_target_dynamic} from the (new) input $\bar v(t)$ to the output $y(t) = \bar \chi(0,t) = x(0,t)$ (cf.\ \eqref{eq:Volterra_dynamic_prelim}, \eqref{eq:homogenization}, \eqref{eq:hopf_cole}, \eqref{eq:Volterra_dynamic}), which is generally not possible with a static state feedback (see Section~\ref{subsec:static_controller}) because the diffusion processes in \eqref{eq:sys_target_static} are coupled via the matrix $A_0(z)$.
In contrast to the coupled multivariable system \eqref{eq:sys_target_static} in the static case, the target system \eqref{eq:sys_target_dynamic} consists of multiple independent scalar equations.

\addtolength{\textheight}{-3.2cm}   

\section{NUMERICAL RESULTS}
\label{sec:simulation}
A numerical example demonstrates that the dynamic state feedback controller is indeed able to stabilize systems with full boundary coupling matrices $Q_0$ and that it can achieve a decoupled input-output behavior from the input $\bar v(t)$ to the output $x(0,t)$.
We consider system \eqref{eq:sys_orig} with the parameters $\Lambda(z) = \diag(3,2,1)$, $A(z) = \rme^z \cdot 1_{3 \times 3}$, $Q_1 = 0.1 I$, where all elements of $1_{3 \times 3}$ are one.
The matrix $Q_0$ is specified later.

For the simulation, we use \textsc{Matlab}.
Derivatives of the system dynamics \eqref{eq:sys_orig} are approximated using an explicit finite difference scheme with Euler forward differences w.r.t. time and second-order central differences in space.
We choose a time step $\Delta t = \frac{25}{6} \cdot 10^{-4}$ and a CFL number $\lambda_i \frac{\Delta t}{\Delta z_i^2} = \frac{1}{6}$, resulting in different spatial steps\changes{\footnote{\changes{Note that $\Delta t$ and the CFL number are chosen to obtain $\Delta z_n = 0.05$.}}} $\Delta z_i$ for each of the $n=3$ state components of system \eqref{eq:sys_orig}.
The kernel equations \eqref{eq:sys_decoup_kernel_equations} are solved using the \textsc{Matlab} toolbox \cite{backsteppingToolboxParabolisch} and those in \eqref{eq:kernel_equations_dynamic} are solved via transformation into integral equations and a successive approximation scheme like in \cite{baccoliBoundaryControlCoupled2015}.

The control input $u(t)$ is calculated from \eqref{eq:bs_trafo_1_input} with \eqref{eq:utilde_BC} and \eqref{eq:v_BC2} and the controller dynamics \eqref{eq:dyn_ext_PDE}, \eqref{eq:dyn_ext_BCsigma1} and \eqref{eq:v_BC1}, together with the control law \eqref{eq:control_law_v}.
The controller dynamics is approximated with the same scheme as the system dynamics.
Integrals, e.g., in the control law \eqref{eq:control_law_v}, are computed with the trapezoidal rule.

Two different scenarios are analyzed, one which considers stabilization and another one to confirm the input-output decoupling potential.
In both cases, the controller state is initialized as a constant continuation of the homogenized system's boundary values, i.e., $\bar w_i(\zbar_i,0) = \bar x_i(\sigma_i(1),0)$.
Note that due to the (positive) entries of $A(z)$ and the signs of the entries of $Q_0$ and $Q_1$, the uncontrolled system with $u(t) = 0$ is inherently unstable.

\subsection{Stabilization}
\label{subsec:sim_stabi}
In order to stabilize system \eqref{eq:sys_orig} with $Q_0 = -0.1 \cdot 1_{3 \times 3}$, the parameters of the target system \eqref{eq:sys_target_dynamic} are chosen as $\bar v(t) = 0$, $B_0 = 0$ and $B(z) = -I$.
We set initial conditions $x_i(z,0) = i \cdot \rme^{-(z-0.2)^2} + c_{1,i} z^2 + c_{2,i} z$ with $ i=1,2,3$, where $c_{1,i}$ and $c_{2,i}$ are chosen such that the BCs \eqref{eq:sys_orig_BC0} and \eqref{eq:sys_orig_BC1} are satisfied, which ensures a classical solution.
Figure~\ref{fig:l2norm_Q0full} shows the (weighted) $L_2$ norm of the state $x(z,t)$.
It can be seen that the norm decays and converges to zero, which confirms that the unstable system \eqref{eq:sys_orig} is indeed stabilized.

\begin{figure}
        \centering
        \vspace{.25cm}
        \includegraphics[width=\linewidth]{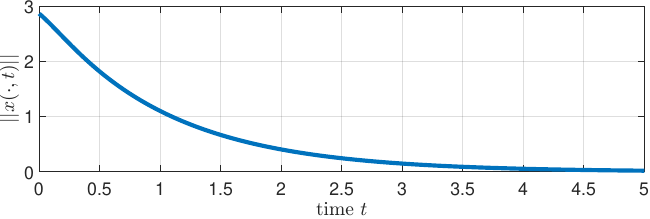}
        \caption{(Weighted) $L_2$ norm $\lVert x(\cdot,t) \rVert$ of the state in the closed loop with a dynamic state feedback where $\bar v(z) = 0$, $B_0 = 0$, $B(z) = -I$.}
        \label{fig:l2norm_Q0full}
\end{figure}

\subsection{Decoupling}
\label{subsec:sim_decoupling}
To demonstrate the decoupling potential of the dynamic controller, we compare it to the static state feedback controller from Section~\ref{subsec:static_controller}.
To this end, we set $Q_0 = -0.1 I$ because the static controller in \cite{deutscherBacksteppingControlCoupled2018} as described in \mbox{Section~\ref{subsec:static_controller}} requires $Q_0$ to be diagonal.
For the case of static state feedback, we set $\mu=1$ in \eqref{eq:sys_target_static}, analog to $B(z) = -I$ and $B_0 = 0$ for the dynamic state feedback.
We assume for simplicity that $x(z,0) = 0 , \, \forall z \in [0,1]$, and use a unit step for $\bar v_1(t)$ while keeping $\bar v_2(t) = \bar v_3(t) = 0, \, \forall t\geq0$.
This is depicted in Figure~\ref{fig:traj_step} on the left.
On the right, we can see the output $x(0,t)$ for both the static and the dynamic state feedback.
When using a dynamic feedback controller, the input-output behavior is clearly decoupled.
Only the first output component $x_1(0,t)$ reacts to the step in the first input component $\bar v_1(t)$, while the other output components remain at zero.
This is not the case for the static state feedback controller due to the coupling between the state components introduced by the stricly lower triangular matrix $A_0(z)$ in \eqref{eq:sys_target_static_PDE}.
Note that the different trajectories of $x_1(0,t)$ are because of the different location of $\bar v_1(t)$ in the static and the dynamic design, respectively (see \eqref{eq:sys_target_static} and \eqref{eq:sys_target_dynamic}).
\begin{figure*}
        \begin{center}
                \vspace{.25cm}
                \includegraphics[width=\textwidth]{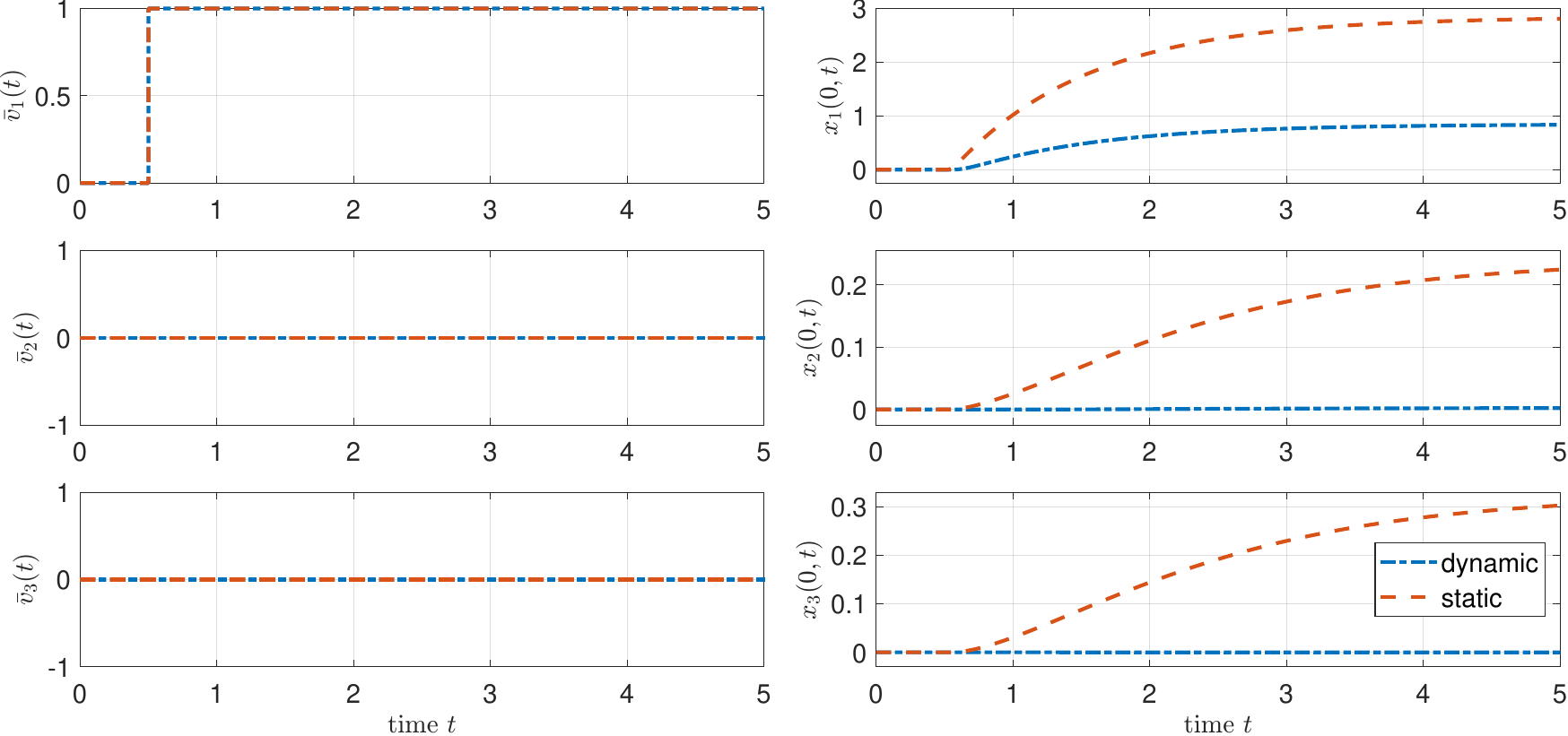}
                \caption{Unit step in $\bar v_1(t)$ and $\bar v_2(t) = \bar v_3(t) = 0$ (left) and step response of the output $x(0,t)$ of system \eqref{eq:sys_orig} (right). For the dynamic state feedback, we set $B(z) = -I$ and $B_0=0$, and for the static state feedback, we set $\mu=1$.}
                \label{fig:traj_step}
        \end{center}
\end{figure*}

\section{CONCLUDING REMARKS}
\label{sec:conclusion_outlook}
In this paper, we present the design of a dynamic controller for coupled diffusion-reaction systems.
We introduce a dynamic extension and a backstepping-based static state feedback for the dynamically extended system.
As there are no restrictions on the form of $Q_0$, this allows stabilizing a larger class of boundary-coupled parabolic systems.
Further, this design gives more freedom in choosing the target system, thus enabling to assign more general dynamics to the closed-loop system.
This can be used, for instance, for decoupling of the transfer behavior from the input $\bar v(t)$ to the output $x(0,t)$.
Future work will focus on analyzing different choices of the controller dynamics \eqref{eq:dyn_ext_PDE} and of the target system \eqref{eq:sys_target_dynamic}, and what other possibilities and applications this offers.

\section{ACKNOWLEDGMENTS}
This research was funded in part by the Deutsche Forschungsgemeinschaft (DFG, German Research Foundation) under project no. 517291864 and the Austrian Science Fund (FWF) [I 6519-N].



\printbibliography

%

\end{document}

%% file: mycommands.tex
\newcommand{\rme}{\mathrm{e}}				
\newcommand{\Rset}{\mathbb R}				

\newcommand\dint[1]{\,\mathrm{d}{#1}}
\newcommand\dd{\mathrm{d}}

\newcommand\diag{\text{diag}}

\newcommand{\highlighteq}[2]{%
  \textcolor{#1}{#2}}

\newtcolorbox{ovgubox}[2][]{
  colframe=ovguPurpleDark,
  colback=ovguPurpleLight,          
  colbacktitle=ovguPurpleDark,
  coltitle=white,
  rounded corners,
  boxrule=0.4pt,
  left=0pt,
  right=0pt,
  top=0pt,
  bottom=0pt,
  before skip=10pt,
  after skip=10pt,
  title={#2},
  fonttitle=\bfseries, 	  
  arc=5pt,
  valign=center,
  #1
}